\newtheorem{theorem}{Theorem}[section]
\newtheorem{proposition}[theorem]{Proposition}
\newtheorem{lemma}[theorem]{Lemma}
\newtheorem{corollary}[theorem]{Corollary}
\newtheorem{remark}[theorem]{Remark}
\newtheorem{assumption}[theorem]{Assumption}
\begin{document}

\title{On optimal strategies for utility maximizers\\ in the Arbitrage Pricing Model}

\author{Mikl\'os R\'asonyi\thanks{Alfr\'ed R\'enyi Institute of
Mathematics, Hungarian Academy of Sciences, Budapest. 
E-mail: \texttt{rasonyi@renyi.mta.hu}}}

\date{\today}

\maketitle

\begin{abstract}
We consider a popular model of microeconomics with countably many assets: the Arbitrage Pricing 
Model. We study the problem of optimal investment under an expected utility criterion and
look for conditions ensuring the existence of optimal strategies.
Previous results required a certain restrictive hypothesis on the tails 
of asset return distributions. Using a different method,  we manage to remove this hypothesis, at the price 
of stronger assumptions on the moments of asset returns. 
\end{abstract}

\noindent\textsl{MSC 2010 subject classification:} Primary: 91B16, 91B25; secondary: 49M99, 93E20\\
\textsl{Keywords:} utility maximization, large financial markets, optimal strategies, risk-neutral measures

\section{Introduction}

The Arbitrage Pricing Model was first proposed in Ross (1976). The related mathematics
was clarified by Huberman (1982). This model has become standard
material in courses of economic theory, Huang \& Litzenberger (1988), and also inspired far-reaching 
developments in financial mathematics, the theory of so-called ``large financial markets'': see  
e.g. Kabanov \& Kramkov (1994,1998), Klein \& Schachermayer (1996a,1996b), Klein (2000), Rokhlin (2008)
and Cuchiero et al. (2015).
Large financial markets constitute a suitable framework for analysing markets with many assets such 
as bond markets.

The Arbitrage Pricing Model is a one-period market model containing a countably infinite number 
of assets.
By asymptotic considerations, it makes possible the study of diversification effects in large portfolios.
Excluding various kinds of asymptotic arbitrage opportunities leads to qualitative conclusions 
on the parameters (see Assumption \ref{b} below as well as Kabanov \& Kramkov(1998)) which allow for empirical tests 
and can even be used for asset pricing purposes, see Ross (1976), Huberman (1982),
Huang \& Litzenberger (1988).

In the present paper we treat the problem of optimal investment with a criterion of expected utility.
Continuing the work initiated in R\'asonyi (2016), we seek general conditions under which the
existence of optimal portfolios can be asserted for the Arbitrage Pricing Model. 

Existence theorems in the context of finitely
many assets have been a focus of intensive study, see Kramkov \& Schachermayer (1999),
Kabanov \& Stricker (2002), Schachermayer (2001), Biagini \& Frittelli (2005),
Owen and \v{Z}itkovi\'c (2009), just to
mention a few. Knowing that there is an optimiser brings multiple benefits: it is a reassuring
fact \emph{per se}, since explicit formulas can rarely be found; it helps identifying the
conditions on the underlying market and on investor preferences that are necessary for a
well-posed problem; finally, it usually constitutes a basis for eventual numerical methods 
to find the optimiser. 

In the context of large financial markets, De Donno et al. (2005) investigated the existence of
optimisers in utility maximization for the first time. They worked with the family  of ``generalized
strategies'' which, roughly speaking, correspond to the closure (in a suitable topology) of the 
payoffs of investments into finitely many assets. Thus their optimizer lacks an interpretation
in terms of a portfolio in infinitely many assets. Part of the mathematical difficulties
related to this problem come from the fact that it is not easy to characterize this closure.
The first main contribution of the present paper is to provide reasonable conditions
under which a simple characterization is possible, see Lemma \ref{tartu} below. 
Such results were first obtained in the companion paper
R\'asonyi (2016), but under restrictive hypotheses which required, in particular, that
asset returns take arbitrarily large positive and negative values. In this article
we manage to remove this hypothesis on the tails of asset returns at the price of
imposing more stringent integrability conditions on them. See Section \ref{gorr}
for further details.

Our second main contribution is establishing that an optimal investment exists,
see Theorem \ref{duda} in Section \ref{gorr} below. In this way we complement
earlier results of R\'asonyi (2016). Proofs will be given in Section \ref{proh},
while Section \ref{appo} contains auxiliary results.

\section{Setting}\label{kett}

For $x\in\mathbb{R}$ we denote $x^+:=\max\{0,x\}$ and $x^-:=\max\{0,-x\}$,
the positive/negative parts of $x$.
Let $(\Omega,\mathcal{F},P)$ be a probability space. The expectation of a random variable $X$ with respect
to some measure $Q$ on $(\Omega,\mathcal{F})$ is denoted by $E_Q[X]$. If $Q=P$ we drop the
subscript. $L^p$ is the set of
$p$-integrable random variables with respect to $P$, for $p\geq 1$. 
We use $L^{\infty}$ to denote the family of
of (essentially) bounded random variables (with respect to $P$). When $X\in L^2$,
we define its variance as
\begin{equation}
\mathrm{var}(X):=E\left[(X-EX)^2\right].
\end{equation}
The symbol $\sim$ denotes
equivalence of measures. 

We now briefly describe the Arbitrage Pricing Model, concentrating on the mathematical
aspects. We assume that the economy in consideration contains assets with returns
\begin{eqnarray}
R_0 &:=& r;\quad R_i:=\mu_i+\bar{\beta}_i\varepsilon_i,\quad 1\leq i\leq m;\\
R_i &:=& \mu_i+\sum_{j=1}^m \beta^j_i\varepsilon_j+\bar{\beta}_i
\varepsilon_i,\quad i>m,
\end{eqnarray}
where the $\varepsilon_i$ are random variables and $\mu_i,\beta_i,\overline{\beta}_i$ are
constants. Asset $0$ represents a riskless investment with a constant rate of return $r\in\mathbb{R}$.
For simplicity, we set $r=0$ from now on.

The random variables $R_i$, $i\geq 1$ represent the profit (or loss) created tomorrow from
investing one dollar's worth of asset $i$ today. The random
variables $\varepsilon_i$, $i=1,\ldots,m$ serve
as \emph{factors} which influence the return on all the assets $i\geq 1$ while $\varepsilon_i,\ {i>m}$
are random sources particular to the individual assets $R_i$, $i>m$. 
We assume that the $\varepsilon_i$ are square-integrable, independent random variables satisfying 
\begin{eqnarray}
E[\varepsilon_i]=0,\quad E\left[\varepsilon_i^2\right]=1,\quad i\geq 1.
\end{eqnarray}
We further assume 
$\bar{\beta}_i\neq 0$, $i\geq 1$.

\begin{remark}
{\rm We deviate from the original definitions of Ross (1976) in two ways. First, in that
paper the $\varepsilon_i$ are assumed only uncorrelated. It seems, however, that
this condition is too weak to obtain interesting results in the present context,
see Proposition 4 of R\'asonyi (2004). Hence, as in Kabanov \& Kramkov (1998), R\'asonyi (2004), we take
independent $\varepsilon_i$. Second, in Ross (1976) it was not assumed that
the factor assets $R_i$, $1\leq i\leq m$ are tradeable. This is a rather mild additional
assumption we require, just like in Kabanov \& Kramkov (1998), R\'asonyi (2004).} 
\end{remark}

A {\em portfolio} 
$\psi$ in the assets $0,\ldots,k$ is
an arbitrary sequence $\psi_i,\ 0\leq i\leq k$ 
of real numbers satisfying 
\begin{equation}\label{lb}
\sum_{i=0}^k \psi_i=0.
\end{equation}

This means that we invest $\psi_i$ dollars in the respective assets today in such a way that
our investments add up to our initial capital $0$ (an arbitrary initial capital could
be treated similarly, we chose $0$ for simplicity again). Such a portfolio will have value
\begin{equation}
V({\psi}):=\sum_{i=0}^{k} \psi_i R_i
\end{equation}
tomorrow.

As $R_0=0$, each portfolio is characterized by $\psi_1,\ldots,\psi_k$. 
We reparametrize the model by introducing 
\begin{eqnarray}
b_i &:=& -\frac{\mu_i}{\bar{\beta}_i},\quad 1\leq i\leq m;\\
b_i &:=& -\frac{\mu_i}{\bar{\beta}_i}+
\sum_{j=1}^m \frac{\mu_j\beta^j_i}{\bar{\beta}_j\bar{\beta_i}},\quad i>m.
\end{eqnarray}
Asset returns then take the following form:
\begin{eqnarray}
R_i &=& \bar{\beta}_i(\varepsilon_i-b_i),\quad 1\leq i\leq m;\\
R_i &=& \sum_{j=1}^m\beta_i^j(\varepsilon_j-b_j)+\bar{\beta}_i(\varepsilon_i-
b_i),\quad i>m.
\end{eqnarray}

It is easy to see that the set
\begin{equation}
J_1:=\{V(\psi):\psi\mbox{ is a portfolio in assets }R_0,\ldots,R_k\}
\end{equation}
coincides with 
\begin{equation}
J_2:=\left\{\sum_{i=1}^k \phi_i (\varepsilon_i-b_i):\phi_1,\ldots,\phi_k\in\mathbb{R}\right\}.
\end{equation}
This motivates us to define \emph{elementary strategies}
$\mathcal{E}$ as the set of sequences $\phi=\{\phi_i,\ i\geq 1\}$ of real numbers 
such that, for some $k\geq 1$, $\phi_i=0$, $i> k$.
For each $\phi\in\mathcal{E}$, we set
\begin{equation}\label{tarra}
V(\phi):=\sum_{i=1}^{\infty} \phi_i (\varepsilon_i-b_i).
\end{equation}

This is a slight abuse of notation but it is justified by the equality $J_1=J_2$. From
now on we will use $V(\phi)$ exclusively in the sense \eqref{tarra}.

We interpret $\mathcal{E}$ as the set of possible portfolios using only finitely many
assets. For $\phi\in\mathcal{E}$, the value tomorrow is given by $V(\phi)$.
Define $K:=\{V(\phi):\phi\in\mathcal{E}\}$. 

This article is about optimization over the set of portfolios. 
As $K$ fails to be closed in any reasonable sense, we have no hope for finding an optimiser in it. 
Hence a natural idea is to enlarge the set of admissible portfolios in the
way presented below. 

\begin{assumption}\label{b}
We have 
\begin{equation}
\sum_{i=1}^{\infty} b_i^2<\infty.
\end{equation}
\end{assumption}

As explained in R\'asonyi (2004) and in Section 3 of R\'asonyi (2016), Assumption \ref{b} is equivalent to absence of
particular type of asymptotic arbitrage (in the sense of Ross (1976), Huberman (1982)) hence it is a plausible hypothesis on the
market parameters. 

Let $\ell_2:=\{\phi_i:\sum_{i=1}^{\infty}\phi_i^2<\infty\}$ be the family of square-summable
sequences. Recall that $\ell_2$ is a Hilbert space with the norm $||\phi||_{\ell_2}:=\sqrt{\sum_{i=1}^{\infty}\phi_i^2}$.
We set $\mathcal{A}:=\ell_2$, this will be the set of strategies over
which we will formulate our problem of utility maximization. 
Under Assumption \ref{b}, we define $V(\phi):=\sum_{i=1}^{\infty}\phi(\varepsilon_i-b_i)$,
where the series converges in $L^2$. 

Now set $K_1:=\{V(\phi):\phi\in\mathcal{A}\}$. 
Taking a supremum over $K_1$, it will be possible
to find a maximizer inside the same class, under appropriate assumptions, see Theorems
\ref{old_death} and \ref{duda} below.

\section{Existence of optimal strategies}\label{gorr}

Let $u:\mathbb{R}\to\mathbb{R}$ a concave, nondecreasing function.
We wish to prove that there is a strategy $\phi^*\in\mathcal{A}'(u)$ such that
\begin{equation}\label{proba}
E\left[u(V(\phi^*))\right]=\sup_{\phi\in\mathcal{A}'(u)}E\left[u(V(\phi))\right], 
\end{equation}
where 
\begin{equation}
\mathcal{A}'(u):=\{\phi\in\mathcal{A}:E\left[u^-(V(\phi))\right]<\infty\}
\end{equation}
is the set of strategies over which the optimization problem \eqref{proba} makes sense.

This is the problem of optimal investment with a risk-averse agent maximizing his/her 
expected utility from terminal portfolio wealth, see Chapters 2 and 3 of F\"ollmer \& Schied (2002) for
a detailed discussion in the case of finitely many assets.

We recall the results obtained in R\'asonyi (2016). The crucial hypothesis in that paper was the following.

\begin{assumption}\label{relevant}
For each $x\geq 0$, both 
\begin{equation}\label{vavelgrof}
\inf_{i\geq 1}P(\varepsilon_i>x)>0\mbox{ and }
\inf_{i\geq 1}P(\varepsilon_i<-x)>0 
\end{equation}
hold. Furthermore, 
\begin{equation}\label{unint}
\sup_{i\in\mathbb{N}}E[\varepsilon_i^2 1_{\{|\varepsilon_i|\geq N\}}]\to 0,\ N\to\infty.
\end{equation}
\end{assumption}

We now present Theorem 4.7 of R\'asonyi (2016).

\begin{theorem}\label{old_death} Let $u:\mathbb{R}\to\mathbb{R}$ be concave and non-decreasing, satisfying
\begin{equation}
u(x)\leq C(x^{\alpha}+1)\mbox{ for all }x\geq 0,
\end{equation}
with some $C>0$, $0\leq \alpha<1$.
Under Assumptions \ref{b} and \ref{relevant}, there exists $\phi^*\in\mathcal{A}'(u)$ such that
\begin{equation}
E\left[u(V(\phi^*))\right]=\sup_{\phi\in\mathcal{A}'(u)}E\left[u(V(\phi))\right].\Box
\end{equation}
\end{theorem}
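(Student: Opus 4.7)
The plan is to apply the standard variational recipe in the Hilbert space $\mathcal{A}=\ell_2$: take a maximising sequence, prove it is norm-bounded, extract a weakly convergent subsequence, and pass to the limit in the functional $\phi\mapsto E[u(V(\phi))]$ using concavity and an appropriate (semi)continuity argument. Let $u^*:=\sup_{\phi\in\mathcal{A}'(u)}E[u(V(\phi))]$; a preliminary remark is that $u^*<\infty$, since for any $\phi\in\mathcal{A}$ one has $V(\phi)\in L^2$ with $\|V(\phi)\|_{L^2}\leq C'\|\phi\|_{\ell_2}$ (using Assumption \ref{b} and independence of the $\varepsilon_i$), while the subpower growth $u(x)\leq C(x^\alpha+1)$ with $\alpha<1$ combined with the $\ell_2$/$L^2$ bound and a homogeneity/scaling argument rules out $u^*=+\infty$.

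Pick $\phi^n\in\mathcal{A}'(u)$ with $E[u(V(\phi^n))]\to u^*$. The first substantive step is to show that $\{\phi^n\}$ is bounded in $\ell_2$. Suppose not; after passing to a subsequence, $\rho_n:=\|\phi^n\|_{\ell_2}\to\infty$ and the normalised strategies $\tilde\phi^n:=\phi^n/\rho_n$ lie in the closed unit ball of $\ell_2$, hence (by Banach--Alaoglu) converge weakly, along a further subsequence, to some $\tilde\phi\in\ell_2$. By continuity of the bounded linear map $V:\ell_2\to L^2$, $V(\tilde\phi^n)\to V(\tilde\phi)$ weakly in $L^2$, and a Mazur-type passage to convex combinations produces an almost sure limit $\tilde V$ which is a convex combination of weak cluster points and in particular coincides with $V(\tilde\phi)$ after relabelling. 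Here Assumption \ref{relevant} should force $P(V(\tilde\phi)<-N)$ to stay bounded away from $0$ for every $N$: the two-sided uniform tail bound \eqref{vavelgrof} prevents cancellation from making the limiting payoff degenerate, and \eqref{unint} supplies the uniform integrability needed to translate weak $L^2$ convergence into genuine probabilistic control. Consequently $V(\phi^n)=\rho_n V(\tilde\phi^n)$ takes values below $-N\rho_n$ on sets of probability bounded away from zero, which makes $E[u(V(\phi^n))]\to-\infty$, contradicting the maximising property.

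Once boundedness is in hand, extract $\phi^n\rightharpoonup\phi^*$ weakly in $\ell_2$ along a subsequence. Then $V(\phi^n)\rightharpoonup V(\phi^*)$ weakly in $L^2$, and Mazur's lemma produces convex combinations $\hat V^n:=\sum_{k\geq n}\lambda_k^n V(\phi^k)\to V(\phi^*)$ strongly in $L^2$, hence almost surely along a further subsequence; concavity of $u$ gives $E[u(\hat V^n)]\geq\sum_{k\geq n}\lambda_k^n E[u(V(\phi^k))]\to u^*$, so $(\hat V^n)$ is still maximising. To identify $\phi^*$ as the optimiser, use Fatou's lemma on $u^-$ and uniform integrability of $\{u^+(\hat V^n)\}$: the bound $u^+(x)\leq C(x^\alpha+1)$ with $\alpha<1$ together with $L^2$-boundedness of $\hat V^n$ makes $u^+(\hat V^n)$ bounded in $L^{2/\alpha}$, hence uniformly integrable, so $\limsup_n E[u^+(\hat V^n)]\leq E[u^+(V(\phi^*))]$ while $\liminf_n E[u^-(\hat V^n)]\geq E[u^-(V(\phi^*))]$, yielding $E[u(V(\phi^*))]\geq u^*$ and hence equality.

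The main obstacle is unambiguously the $\ell_2$-boundedness step: in finite dimensions the classical no-arbitrage argument bounds a maximising sequence by exploiting degeneracy of the payoff map, but in our countably-infinite setting weak convergence is not enough to preserve the norm, and a naive Mazur argument on the normalised sequence $\tilde\phi^n$ could produce a limit with $V(\tilde\phi)=0$. Assumption \ref{relevant} is precisely what rules this out, and arranging the details of this no-asymptotic-arbitrage argument — in particular, transferring the uniform two-sided tail bounds on the $\varepsilon_i$ into a non-degeneracy statement on the limiting $V(\tilde\phi)$ — is the technical heart of the proof.
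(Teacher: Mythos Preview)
First, note that the paper does not actually prove Theorem~\ref{old_death}: it is quoted verbatim as Theorem~4.7 of R\'asonyi (2016), with the $\Box$ indicating that no proof is given here. So there is no in-paper proof to compare against directly; one can only assess your sketch on its own terms and against the techniques the present paper develops for the companion Theorem~\ref{duda}.

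Your overall architecture---maximising sequence, $\ell_2$-boundedness, weak compactness, Mazur, then Fatou plus uniform integrability of $u^+$---is sound, and your limit-passage step is essentially correct. The genuine gap is exactly where you locate it: the boundedness argument. But your sketch of that step contains a real error, not just missing detail. You write that Assumption~\ref{relevant} should force $P(V(\tilde\phi)<-N)$ to be bounded away from $0$ for every $N$; this is impossible for any fixed $L^2$ random variable. What you actually need is a statement about the \emph{sequence} $\tilde\phi^n$, uniform in $n$: something like the existence of $c,\eta>0$ with $P(V(\psi)<-c)\geq\eta$ for \emph{every} $\psi\in\mathcal{E}$ with $\|\psi\|_{\ell_2}=1$. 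This is a quantitative no-arbitrage estimate, and it is precisely here that the two-sided tail bound~\eqref{vavelgrof} and the uniform integrability~\eqref{unint} do their work in R\'asonyi (2016). Your Mazur/weak-limit discussion of $\tilde\phi$ is a red herring: as you yourself observe, the weak limit may well be zero, and nothing about Assumption~\ref{relevant} prevents that. The contradiction must come from the sequence $\tilde\phi^n$ directly, not from its weak limit. Once the uniform lower-tail bound is in hand, combining it with $E[u^+(V(\phi^n))]\leq C(1+\mathrm{const}\cdot\rho_n^\alpha)$ and the linear decay of $u$ at $-\infty$ (from concavity and non-constancy) gives $E[u(V(\phi^n))]\to-\infty$.

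By contrast, the present paper's route to the analogous Theorem~\ref{duda} does \emph{not} argue boundedness via a direct tail estimate on $V(\phi)$. Instead it constructs a risk-neutral measure $Q\in\mathcal{M}$ with good moment properties (Lemma~\ref{uff}) and uses $E_Q[V(\phi)]=0$ together with H\"older's inequality to bound $E[u^+(V(\phi))]$ by a sublinear function of $E[u^-(V(\phi))]$ (Lemma~\ref{fontos}); boundedness of the maximising sequence in $L^1(Q)$ then falls out, and Koml\'os replaces Banach--Alaoglu/Mazur. This dual-measure approach avoids the quantitative no-arbitrage lemma entirely, at the cost of requiring the extra condition~\eqref{moko} with $\beta>1$.
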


In the present article our purpose is to complement Theorem \ref{old_death} by proving the
existence of optimizers under alternative conditions.
Whereas \eqref{unint} in Assumption \ref{relevant} is rather mild, \eqref{vavelgrof} is somewhat
restrictive: it excludes e.g. the case where
all the $\varepsilon_i$ are bounded random variables. It would thus be desirable to drop condition 
\eqref{vavelgrof}.  In Theorem \ref{duda}
below we manage to do so, at the price of requiring more integrability on the $\varepsilon_i$ than \eqref{unint}.
 
\begin{assumption}\label{novum}
There exists $\gamma>0$ such that
\begin{equation}\label{subgauss}
\sup_i E\left[e^{\gamma|\varepsilon_i|}\right]<\infty, 
\end{equation}
and, for all $i$,
\begin{equation}\label{villi}
P(\varepsilon_i<b_i),P(\varepsilon_i>b_i)>0.
\end{equation}
\end{assumption}

\begin{remark}
{\rm An arbitrage strategy (in the sense of e.g. F\"ollmer \& Schied (2002)) 
is $\hat{\phi}\in\mathcal{E}$ such that $V(\hat{\phi})\geq 0$ a.s. and $P(V(\hat{\phi})>0)>0$.
We say that the no-arbitrage condition (NA) holds in the given market if there are no arbitrage
strategies. (NA) is necessary for the existence of a maximizer for \eqref{proba} when $u$ is
strictly increasing. Indeed, no $\phi^*$ can be optimal if $\hat{\phi}$ violates (NA):
as $V(\phi^*+\hat{\phi})\geq V(\phi^*)$ and there is a strict inequality with positive probability, the strategy $\phi^*+\hat{\phi}$ outperforms $\phi^*$.

Condition \eqref{villi} is easily seen to imply (NA) in our setting: for any $\phi\in\mathcal{E}$
which is not identically zero,
$P(\sum_{i=1}^N \phi_i(\varepsilon_i-b_i)<0)\geq \prod_{i\in L_+(\phi)} P(\varepsilon_i<b_i)\prod_{i\in L_-(\phi)}
P(\varepsilon_i>b_i)>0$
since at least one of the two sets 
$L_+(\phi):=\{1\leq i\leq N:\phi_i> 0\}$, $L_-(\phi):=\{1\leq i\leq N:\phi_i<0\}$ is non-empty.
Conversely, if \eqref{villi} fails then (NA) is obviously violated. This shows that in our
market model \eqref{villi} is precisely the (NA) condition.} 
\end{remark}

The main result of this article is stated now.

\begin{theorem}\label{duda} Let $u:\mathbb{R}\to\mathbb{R}$ be concave and non-decreasing, satisfying
\begin{eqnarray}\label{motyo}
u(x) &\leq& C_1(x^{\alpha}+1)\mbox{ for all }x\geq 0,\\
\label{moko} u(x) &\leq& C_2(-|x|^{\beta}+1)\mbox{ for all }x<0
\end{eqnarray}
with some $C_1,C_2>0$, $0\leq \alpha<1<\beta$.
Under Assumptions \ref{b} and \ref{novum}, there exists $\phi^*\in\mathcal{A}'(u)$ such that
\begin{equation}
E\left[u(V(\phi^*))\right]=\sup_{\phi\in\mathcal{A}'(u)}E\left[ u(V(\phi))\right].
\end{equation}
\end{theorem}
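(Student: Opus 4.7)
I would use the direct method applied to $\phi\mapsto E[u(V(\phi))]$ on $\mathcal{A}'(u)\subset\ell_2$: take a maximising sequence $(\phi^n)$, establish coercivity, extract a weakly convergent subsequence, and pass to the limit via Mazur's lemma and a reverse-Fatou argument.

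\emph{Coercivity.} Decompose $V(\phi)=S(\phi)-m_\phi$ where $S(\phi):=\sum_i\phi_i\varepsilon_i$ has mean zero and variance $\|\phi\|_{\ell_2}^2$ by independence, and $m_\phi:=\sum_i\phi_i b_i$ satisfies $|m_\phi|\le\|b\|_{\ell_2}\|\phi\|_{\ell_2}$ by Cauchy--Schwarz (Assumption \ref{b}). From \eqref{motyo} and Jensen, $E[u^+(V(\phi))]\le C(1+\|\phi\|_{\ell_2}^\alpha)$, while \eqref{moko} yields
\[
E[u(V(\phi))]\le C(1+\|\phi\|_{\ell_2}^\alpha)-C_2\,E[(V(\phi)^-)^\beta].
\]
Since $\beta>1>\alpha$, a uniform lower bound of the form $E[(V(\phi)^-)^\beta]\gtrsim\|\phi\|_{\ell_2}^\beta$ for large $\|\phi\|_{\ell_2}$ would force $\sup_n\|\phi^n\|_{\ell_2}<\infty$. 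Assumption \eqref{subgauss} produces Rosenthal-type uniform estimates, in particular $E[S(\phi)^4]\le K\|\phi\|_{\ell_2}^4$. Applying Paley--Zygmund to $S(\phi)^2$, and combining it with the identity $E[S(\phi)^+]=E[S(\phi)^-]=\tfrac12 E[|S(\phi)|]$ via a Cauchy--Schwarz refinement, gives both-sided tail bounds $P(S(\phi)<-\eta\|\phi\|_{\ell_2}),\ P(S(\phi)>\eta\|\phi\|_{\ell_2})\ge\rho$ for constants $\eta,\rho>0$ independent of $\phi$. This handles the ``noise-dominated'' regime $|m_\phi|\ll\|\phi\|_{\ell_2}$. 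In the complementary ``drift-dominated'' regime, where the mean $-m_\phi$ of $V(\phi)$ dwarfs its standard deviation, one invokes the no-arbitrage content of \eqref{villi}: applied to the test strategy $-b$, it forces $P(\sum_i b_i\varepsilon_i>\|b\|_{\ell_2}^2)>0$ (otherwise $V(-b)\ge 0$ a.s. with positive probability of being strictly positive, violating NA), and this positive probability translates into the required negative-tail mass of $V(\phi)$.

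\emph{Weak limit and passage.} Once $(\phi^n)$ is $\ell_2$-bounded, Banach--Alaoglu in the Hilbert space $\ell_2$ yields $\phi^n\rightharpoonup\phi^*$ along a subsequence; Mazur's lemma provides convex combinations $\tilde\phi^n=\sum_{k\ge n}\lambda^n_k\phi^k\to\phi^*$ in $\ell_2$-norm, and boundedness of $V:\ell_2\to L^2$ (Assumption \ref{b}) gives $V(\tilde\phi^n)\to V(\phi^*)$ in $L^2$, hence almost surely along a further subsequence. Concavity of $u$ yields $E[u(V(\tilde\phi^n))]\ge\sum_k\lambda^n_k E[u(V(\phi^k))]\to\sup$. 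To conclude, $u^+(V(\tilde\phi^n))\le C_1(|V(\tilde\phi^n)|^\alpha+1)$ is bounded in $L^{2/\alpha}$ (with $2/\alpha>1$ since $\alpha<1$ and $\|\tilde\phi^n\|_{\ell_2}$ is bounded), hence uniformly integrable; reverse Fatou combined with Fatou applied to $u^-(V(\tilde\phi^n))\ge 0$ gives $E[u(V(\phi^*))]\ge\limsup E[u(V(\tilde\phi^n))]$, so $\phi^*$ is optimal, and $\phi^*\in\mathcal{A}'(u)$ follows automatically (otherwise the objective at $\phi^*$ is $-\infty$).

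\emph{Main obstacle.} The technical crux is the uniform lower bound on $E[(V(\phi)^-)^\beta]$ in the drift-dominated regime, where $V(\phi)$ is pushed positive by the deterministic mean $-m_\phi$ and the symmetric Paley--Zygmund bound on $S(\phi)$ alone is not enough. This is precisely the regime whose control was previously delivered by the pointwise tail condition \eqref{vavelgrof} of Assumption \ref{relevant}; replacing it by the no-arbitrage consequence of \eqref{villi}, combined with the moment estimates from \eqref{subgauss}, uniformly across $\phi\in\ell_2$, is the delicate point.
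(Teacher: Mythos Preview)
Your overall architecture (direct method, $\ell_2$-boundedness, Mazur, reverse Fatou) is reasonable and the ``passage to the limit'' half is essentially fine. The gap is exactly where you flag it: the coercivity estimate $E[(V(\phi)^-)^\beta]\gtrsim\|\phi\|_{\ell_2}^\beta$ in the drift-dominated regime. Your proposed fix does not work. Condition \eqref{villi} is a \emph{per-index, qualitative} statement; it carries no uniformity in $i$ and applies only to finitely supported strategies, so invoking it on the $\ell_2$-strategy $-b$ is not justified. Even granting $P(V(-b)<0)>0$, this says something about one specific direction and does not ``translate'' into a uniform lower bound on the negative tail of $V(\phi)$ for arbitrary $\phi$ on the unit sphere. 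Note also that $|m_\phi|\le\|b\|_{\ell_2}\|\phi\|_{\ell_2}$, so the drift never truly ``dwarfs'' the noise; the problematic set is rather a neighbourhood of the ray $\mathbb{R}b$, and nothing in your argument controls it quantitatively. Likewise, Paley--Zygmund on $S(\phi)^2$ only yields $P(|S(\phi)|\ge\eta\|\phi\|_{\ell_2})\ge\rho$ with $\eta<1$, which is useless once $\|b\|_{\ell_2}\ge 1$; your one-line upgrade to a two-sided bound with large $\eta$ is not substantiated.

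The paper takes a genuinely different, dual route that sidesteps this obstacle entirely. It first constructs a risk-neutral measure $Q\sim P$ with $dQ/dP,\,dP/dQ\in L^p$ for any prescribed $p$ (Lemma \ref{uff}): the tail indices are handled by an implicit-function construction using the moment control from \eqref{subgauss}, while the finitely many remaining indices use \eqref{villi} only qualitatively. The identity $E_Q[V(\phi)^+]=E_Q[V(\phi)^-]$ together with two H\"older inequalities then yields
\[
E[u^+(V(\phi))]\;\le\; C\bigl(1+(-E[u(-V(\phi)^-)])^{\alpha/\beta}\bigr),
\]
so along a maximising sequence one obtains $\sup_n E_Q[|V(\phi_n)|]<\infty$ rather than an $\ell_2$-bound (Lemma \ref{fontos}). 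Koml\'os's theorem then produces a.s.\ convergent convex combinations, and the limit is shown to lie in $K_1$ because $K_1$ is closed in probability (Lemma \ref{tartu}); it is only in this closedness lemma that the $\ell_2$-boundedness argument appears, and there it is applied to an already \emph{convergent} sequence, where uniform integrability of $|V(\tilde\phi(n))|^2$ (from \eqref{subgauss}) forces $\mathrm{var}=1$ to contradict $V(\tilde\phi(n))\to 0$. In short, the paper never needs the uniform negative-tail estimate you are missing: the risk-neutral measure supplies the balance between $V(\phi)^+$ and $V(\phi)^-$ directly.
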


\begin{remark}
{\rm Note that neither of Assumptions \ref{novum}, \ref{relevant} implies the other:
 \eqref{vavelgrof} is stronger than \eqref{villi} while \eqref{unint} is weaker than \eqref{subgauss}. Hence 
Theorem \ref{duda} complements Theorem \ref{old_death}. 
Notice also condition \eqref{moko} in Theorem \ref{duda} that is absent in Theorem \ref{old_death}.
As $u$ is concave and non-decreasing, \eqref{moko} always holds with $\beta=1$. Imposing $\beta>1$ is then only
a mild extra requirement.}
\end{remark}

\section{Proofs}\label{proh}

Condition \eqref{subgauss} in Assumption \ref{novum} implies that the family $\{V(\phi):\phi\in\mathcal{E},||\phi||_{\ell_2}\leq \delta\}$
enjoys a very strong, exponential uniform integrability property, for $\delta>0$ small enough, see
the next lemma and the ensuing remark.

\begin{lemma}\label{ui} Under condition \eqref{subgauss}, there is $\delta_0>0$ such that
\begin{equation}
\sup_{\phi\in\mathcal{E}, ||\phi||_{\ell_2}\leq\delta_0}Ee^{ |\sum_{i=1}^{\infty}\phi_i\varepsilon_i|}
\end{equation}
is finite.
\end{lemma}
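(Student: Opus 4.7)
The plan is to exploit independence of the $\varepsilon_i$ to factorize the moment generating function, then bound each factor by a sub-Gaussian-type estimate that comes from the uniform exponential integrability \eqref{subgauss}.

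First I would extract a uniform polynomial moment bound: setting $M := \sup_i E[e^{\gamma|\varepsilon_i|}]<\infty$, the inequality $|x|^k/k! \le e^{\gamma|x|}/\gamma^k$ yields $E[|\varepsilon_i|^k]\le k!M/\gamma^k$ for every $i$ and every integer $k\ge 0$. This will be the only use of \eqref{subgauss}.

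Second, I would convert this into a uniform sub-Gaussian bound on $E[e^{s\varepsilon_i}]$ for $|s|\le \gamma/2$. Because $E[\varepsilon_i]=0$, Taylor expansion gives
\begin{equation*}
E[e^{s\varepsilon_i}]\;=\;1+\sum_{k\ge 2}\frac{s^k}{k!}E[\varepsilon_i^k]\;\le\;1+M\sum_{k\ge 2}\bigl(|s|/\gamma\bigr)^k\;\le\;1+Cs^2\;\le\;e^{Cs^2},
\end{equation*}
for a constant $C=C(M,\gamma)$ depending only on $M,\gamma$ (the geometric series is summable and bounded above by $2(|s|/\gamma)^2$ for $|s|\le\gamma/2$).

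Third, I would choose $\delta_0:=\gamma/2$ (or any smaller positive number) so that whenever $\phi\in\mathcal{E}$ satisfies $\|\phi\|_{\ell_2}\le\delta_0$, we automatically have $|\phi_i|\le\delta_0\le\gamma/2$ for every $i$. Since $\phi$ has only finitely many non-zero entries, independence of the $\varepsilon_i$ gives
\begin{equation*}
E\Bigl[e^{\pm\sum_{i}\phi_i\varepsilon_i}\Bigr]\;=\;\prod_i E[e^{\pm\phi_i\varepsilon_i}]\;\le\;\prod_i e^{C\phi_i^2}\;=\;e^{C\|\phi\|_{\ell_2}^2}\;\le\;e^{C\delta_0^2}.
\end{equation*}
Using $e^{|x|}\le e^{x}+e^{-x}$ then gives the uniform bound $E[\exp(|\sum_i\phi_i\varepsilon_i|)]\le 2e^{C\delta_0^2}$, which is the claim.

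The only mildly delicate step is the second one, where one must be careful that $E[\varepsilon_i]=0$ kills the linear term so that the expansion starts at $k=2$, yielding an $s^2$ bound rather than an $|s|$ bound; without this centering, the product over $i$ could not be controlled by $\|\phi\|_{\ell_2}^2$. Everything else is a direct consequence of independence and the fact that $\phi$ is finitely supported, so no convergence subtleties arise at this stage.
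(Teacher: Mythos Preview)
Your proof is correct and follows essentially the same approach as the paper: both use $E[\varepsilon_i]=0$ to obtain a uniform sub-Gaussian bound $E[e^{s\varepsilon_i}]\le e^{Cs^2}$ for $|s|\le\gamma/2$, then factorize by independence and conclude via $e^{|x|}\le e^x+e^{-x}$. The only cosmetic difference is that the paper reaches the MGF bound through a second-order Taylor expansion with Lagrange remainder, whereas you use the full power series together with the moment estimates $E[|\varepsilon_i|^k]\le k!M/\gamma^k$; your route avoids the auxiliary threshold $T$ and gives $\delta_0=\gamma/2$ directly.
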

\begin{proof}  
For each $i$ and for all $t$ with $|t|\leq\gamma/2$ there is a random variable 
$\xi_i(t)$ between $0$ and $t$ such that
\begin{equation}
e^{t\varepsilon_i}=1+t\varepsilon_i+t^2\varepsilon_i^2e^{\xi_i(t)\varepsilon_i}.
\end{equation}
It follows that
\begin{equation}
E\left[e^{t\varepsilon_i}\right]\leq 1+t^2 E\left[\varepsilon_i^2e^{\gamma|\varepsilon_i|/2}\right].
\end{equation}
By elementary properties of the logarithm and by \eqref{subgauss}, 
\begin{equation}
|\ln E\left[e^{t\varepsilon_i}\right]|\leq c t^2 E\left[\varepsilon_i^2e^{\gamma |\varepsilon_i|/2}\right]\leq C t^2,
\end{equation}
for $t$ small enough, say, for $|t|\leq T$, with some constants $c,C,T>0$ 
that are independent of $i$. If $\phi_i=0$ for $i>N$ then we then have
\begin{equation}
E\left[e^{ \sum_{i=1}^{\infty}\phi_i \varepsilon_i}\right]=\prod_{i=1}^N  E\left[e^{\phi_i\varepsilon_i}\right]
\leq e^{C\sum_{i=1}^N \phi_i^2}
\end{equation}
provided that $|\phi_i|\leq \min\{T,\gamma/2\}$ for all $i$. This implies, by $e^{|x|}\leq e^x+e^{-x}$,
\begin{equation}
E\left[e^{ |\sum_{i=1}^{\infty}\phi_i \varepsilon_i|}\right]\leq
2e^{C\sum_{i=1}^N \phi_i^2}=2e^{C\Vert\phi\Vert_{\ell_2}^2},
\end{equation}
for all $\phi\in\mathcal{E}$ with $\Vert\phi\Vert_{\ell_2}\leq \min\{T,\gamma/2\}$,
so one may set $\delta_0:=\min\{T,\gamma/2\}$.
\end{proof}

\begin{remark}\label{euler}
{\rm Lemma \ref{ui} implies that the family $\{e^{ |\sum_{i=0}^{\infty}\phi_i\varepsilon_i|}:\phi\in\mathcal{E}, ||\phi||_{\ell_2}\leq\delta\}$
is uniformly integrable for all $\delta<\delta_0$, a fortiori, the family 
$\{|\sum_{i=0}^{\infty}\phi_i\varepsilon_i|^2:\phi\in\mathcal{E}, ||\phi||_{\ell_2}\leq 1\}$ is also uniformly integrable.
It is this latter conclusion that we need in subsequent arguments. We wonder whether this could be deduced from an integrability
condition weaker than \eqref{subgauss}.}
\end{remark}

We denote by $\overline{K}$ the closure of $K$ with respect to convergence in probability.

\begin{lemma}\label{tartu}
Let Assumption \ref{b} and \eqref{subgauss} in Assumption \ref{novum} be in force. 
Then $K_1=\overline{K}$ so the set $K_1$ is closed in probability.
\end{lemma}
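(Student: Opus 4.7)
The inclusion $K_1\supseteq\overline{K}$ is the substantive direction; the opposite inclusion is immediate, because for any $\phi\in\mathcal{A}=\ell_2$ the truncations $(\phi_1,\ldots,\phi_n,0,0,\ldots)\in\mathcal{E}$ have values converging to $V(\phi)$ in $L^2$, hence in probability, so $K_1\subseteq\overline{K}$. For the main direction, fix $X\in\overline{K}$ and $\phi^{(n)}\in\mathcal{E}$ with $V(\phi^{(n)})\to X$ in probability. The plan has three steps: establish a uniform $\ell_2$-bound on $\phi^{(n)}$, extract a weak $\ell_2$-limit $\phi^*\in\mathcal{A}$, and identify $X=V(\phi^*)$.

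The uniform bound is the main obstacle, and I would obtain it by contradiction. Suppose $\|\phi^{(n)}\|_{\ell_2}\to\infty$ along a subsequence and set $\tilde\phi^{(n)}:=\phi^{(n)}/\|\phi^{(n)}\|_{\ell_2}$, so that $V(\tilde\phi^{(n)})\to 0$ in probability. Split $V(\tilde\phi^{(n)})=W_n-B_n$ with $W_n:=\sum_i\tilde\phi^{(n)}_i\varepsilon_i$ and $B_n:=\sum_i\tilde\phi^{(n)}_i b_i$. By Cauchy--Schwarz and Assumption~\ref{b}, $|B_n|\leq\|b\|_{\ell_2}$; pass to a further subsequence so that $B_n\to B\in\mathbb{R}$ and thus $W_n\to B$ in probability. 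Remark~\ref{euler} applied to the unit vectors $\tilde\phi^{(n)}$ makes $\{W_n^2\}$ uniformly integrable, so in fact $W_n\to B$ in $L^2$. On the one hand $E[W_n]=0$ forces $B=0$; on the other hand independence and $E[\varepsilon_i^2]=1$ give $E[W_n^2]=\|\tilde\phi^{(n)}\|_{\ell_2}^2=1$, forcing $B^2=1$, the desired contradiction.

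Given the bound, Banach--Alaoglu in $\ell_2$ produces a subsequence with $\phi^{(n_k)}\rightharpoonup\phi^*\in\ell_2=\mathcal{A}$. The map $V\colon\ell_2\to L^2$ is bounded linear, since by independence $\|V(\phi)\|_{L^2}^2=\|\phi\|_{\ell_2}^2+(\sum_i\phi_i b_i)^2\leq(1+\|b\|_{\ell_2}^2)\|\phi\|_{\ell_2}^2$. Mazur's lemma yields convex combinations $\sigma_N$ of $\{\phi^{(n_k)}:k\geq N\}$ converging to $\phi^*$ in $\ell_2$-norm, whence $V(\sigma_N)\to V(\phi^*)$ in $L^2$. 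Simultaneously, the uniform $\ell_2$-bound together with Remark~\ref{euler} renders $\{V(\phi^{(n)})^2\}$ uniformly integrable, so $V(\phi^{(n_k)})\to X$ in $L^2$, and the same $L^2$-limit is inherited by the convex combinations $V(\sigma_N)$ by the triangle inequality. Matching the two limits gives $X=V(\phi^*)\in K_1$, completing the proof.
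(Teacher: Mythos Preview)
Your proof is correct and follows essentially the same route as the paper's: both normalize a putatively unbounded sequence and use the uniform integrability from Remark~\ref{euler} to force an $L^2$ limit that contradicts the unit variance, then pass to convex combinations of a weakly convergent subsequence in $\ell_2$ to identify $X=V(\phi^*)$. The only cosmetic differences are that the paper invokes the Banach--Saks theorem directly (where you use Banach--Alaoglu plus Mazur) and works with a.s.\ convergence of the Ces\`aro means (where you upgrade convergence in probability to $L^2$ via the same uniform integrability); neither changes the substance.
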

\begin{proof} We have $\mathcal{E}\subset\mathcal{A}$ and for every $\phi\in\mathcal{A}$,
 $V(\phi(n))\to V(\phi)$ in $L^2$ and hence also in probability, where $\phi(n)\in\mathcal{E}$ is
 such that $\phi_i(n)=\phi_i$,
 $i\leq n$, $\phi_i(n)=0$, $i>n$. Thus $K$ is dense in $K_1$ and it is enough to prove that $K_1$ is
 closed in probability.
 
Let $\phi(n)\in\mathcal{A}$ such that $V(\phi(n))\to X$ a.s. for some random variable $X$.
We may and will suppose that $\phi(n)\in\mathcal{E}$ for all $n$. First let us consider the case where
$\sup_n ||\phi(n)||_{\ell_2}=\infty$. By extracting a subsequence (which we continue to denote by $n$)
we may and will assume $||\phi(n)||_{\ell_2}\to\infty$, $n\to\infty$.
Define $\tilde{\phi}_i(n):=
\phi_i(n)/||\phi(n)||_{\ell_2}$ for all $n,i$. Clearly, $\tilde{\phi}(n)\in\mathcal{A}$
with $||\tilde{\phi}(n)||_{\ell_2}=1$ and
\begin{equation}
\lim_{n\to\infty}V(\tilde{\phi}(n))=0\mbox{ a.s.} 
\end{equation}
Let $M:=\sqrt{\sum_{i=1}^{\infty} b_i^2}$.
We obviously have $|\sum_{i=1}^{\infty} \tilde{\phi}_i(n)b_i|\leq M$ for all $n$.

By Remark \ref{euler}, the family $|V(\tilde{\phi}(n))|^2$, $n\geq 1$ is uniformly
integrable which implies that $V(\tilde{\phi}(n))\to 0$ in $L^2$ as well. But this is
absurd since $\mathrm{var}(V(\tilde{\phi}(n)))=1$ for all $n\geq 1$. This contradiction
shows that necessarily $\sup_n ||\phi(n)||_{\ell_2}<\infty$. 

Then there is a 
subsequence which is weakly convergent in $\ell_2$ and, by the Banach-Saks theorem applied in $\ell_2$, 
suitable convex combinations 
\begin{equation}
\widehat{\phi}(N):=\frac{1}{N}\sum_{k=1}^N \phi_{n_k}
\end{equation}
with a subsequence $n_k$ satisfy
\begin{equation}
\left\Vert\widehat{\phi}(N)-\phi^*\right\Vert_{\ell_2}^2=\sum_{i=1}^{\infty} \left(\widehat{\phi}_i(N)-
\phi_i^*\right)^2\to 0,\ N\to\infty,
\end{equation}
for some $\phi^*\in\mathcal{A}=\ell_2$. Hence, by orthonormality of the system 
$\varepsilon_i$, $i\geq 1$
in $L^2$, 
\begin{equation}
E\left[\left(V(\widehat\phi(N))-V(\phi^*)\right)^2\right]\to 0,\quad N\to\infty,
\end{equation}
so $V\left(\widehat\phi(N)\right)\to V(\phi^*)$ in probability as well.
As $\widehat{\phi}(N)$ are convex combinations of a subsequence of $\phi(n)$,
$V\left(\widehat\phi(N)\right)\to X$ in probability. This implies
$V(\phi^*)=X$ and the statement of this lemma follows.
\end{proof} 

Let $\mathcal{M}$ denote the set of $Q\sim P$ such that $E_QR_i=0$ for all $i\geq 1$.
Elements of $\mathcal{M}$ are the \emph{risk-neutral measures} for this market model.
It is not a priori clear that $\mathcal{M}\neq \emptyset$.

\begin{remark}\label{nulla} {\rm We recall that if $Q\in\mathcal{M}$ is such that $dQ/dP\in L^2$ then
$E_Q\left[V(\phi)\right]=0$ for all $\phi\in\mathcal{A}$, see Lemma 3.4 of R\'asonyi (2016).}
\end{remark}

The next ingredient for the proof of Theorem \ref{duda} is the fact that there are 
$Q\in\mathcal{M}$ with suitable
integrability properties. 

\begin{lemma}\label{uff} Assume that
\begin{equation}\label{harom}
\sup_{i\geq 1} E\left[|\varepsilon_i|^3\right]<\infty
\end{equation}
and  Assumption \ref{b} is in force.
Then for all $p\geq 1$ there exists $Q=Q(p)\in\mathcal{M}$ such that $dQ/dP,dP/dQ\in L^p$.
\end{lemma}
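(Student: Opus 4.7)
The plan is to construct $Q$ as a product-form change of measure: take
$dQ/dP = \prod_{i=1}^\infty f_i(\varepsilon_i)$ where each $f_i>0$ satisfies $E[f_i(\varepsilon_i)]=1$ and $E[\varepsilon_i f_i(\varepsilon_i)] = b_i$. By independence of the $\varepsilon_i$ under $P$, this yields a probability $Q\sim P$ under which the $\varepsilon_i$ remain independent with $E_Q\varepsilon_i = b_i$ for every $i$; in view of the reparametrisation of $R_i$, this is exactly the identity $E_QR_i=0$ defining $\mathcal{M}$.

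For large $i$, where $|b_i|$ is small by Assumption \ref{b}, I use a truncated linear (Esscher-type) tilt
\begin{equation*}
f_i(x) := 1 + \lambda_i\bigl(x\mathbf{1}_{\{|x|\leq N_0\}} - c_i\bigr),\quad c_i := E\bigl[\varepsilon_i\mathbf{1}_{\{|\varepsilon_i|\leq N_0\}}\bigr],\quad \lambda_i := \frac{b_i}{E\bigl[\varepsilon_i^2\mathbf{1}_{\{|\varepsilon_i|\leq N_0\}}\bigr]}.
\end{equation*}
Condition \eqref{harom} gives $E[\varepsilon_i^2\mathbf{1}_{\{|\varepsilon_i|>N\}}]\leq N^{-1}\sup_j E|\varepsilon_j|^3$, so a fixed $N_0$ can be chosen which makes the denominator at least $1/2$ uniformly in $i$, hence $|\lambda_i|\leq 2|b_i|$. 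Since $|c_i|\leq 1$, one obtains $|f_i-1|\leq 2|b_i|(N_0+1)\leq 1/2$ for every $i$ past some index $i_0$, because $b_i\to 0$. For such $i$, $f_i\in[1/2,3/2]$ and a direct computation shows $Ef_i(\varepsilon_i)=1$, $E[\varepsilon_i f_i(\varepsilon_i)]=b_i$. For the finitely many remaining indices $i\leq i_0$, condition \eqref{villi} of Assumption \ref{novum} (which is in force in every intended application of this lemma) ensures that one can build a density $f_i$ bounded above and away from $0$ whose law under the distribution of $\varepsilon_i$ has mean $b_i$, e.g.\ a piecewise-constant perturbation of $1$ straddling $b_i$.

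Set $Z_N := \prod_{i=1}^N f_i(\varepsilon_i)$, a positive $P$-martingale with $EZ_N=1$. The key estimate uses $E[f_i-1]=0$ together with the elementary bound $|(1+x)^q-1-qx|\leq C_q x^2$ valid for $|x|\leq 1/2$ and arbitrary real $q$:
\begin{equation*}
\bigl|E f_i(\varepsilon_i)^q - 1\bigr|\;\leq\; C_q\, E\bigl[(f_i(\varepsilon_i)-1)^2\bigr] \;=\; C_q\lambda_i^2\,\mathrm{var}\bigl(\varepsilon_i\mathbf{1}_{\{|\varepsilon_i|\leq N_0\}}\bigr)\;\leq\; 4C_q b_i^2
\end{equation*}
for every $i\geq i_0$ and each $q\in\{p,-p,1-p\}$. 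By Assumption \ref{b}, $\sum_i b_i^2<\infty$, so the infinite products $\prod_i Ef_i(\varepsilon_i)^q$ converge for all three exponents (the finitely many small-$i$ factors being harmless bounded constants). By independence $EZ_N^p=\prod_{i=1}^N Ef_i^p$, so $\sup_N EZ_N^p<\infty$, and the nonnegative martingale $(Z_N)$ converges $P$-a.s.\ and in $L^p$ to a limit $Z_\infty>0$; define $dQ/dP:=Z_\infty$. Then $E_P Z_\infty^{-p}=\prod_i Ef_i^{-p}<\infty$ yields $dP/dQ=1/Z_\infty\in L^p(P)$, while passing to the limit in $E_P[\varepsilon_i Z_N]=E[\varepsilon_i f_i(\varepsilon_i)]\prod_{j\leq N,\,j\neq i}Ef_j(\varepsilon_j)=b_i$ (valid for $i\leq N$) gives $E_Q\varepsilon_i=b_i$, i.e.\ $Q\in\mathcal{M}$.

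The main obstacle is to reconcile the equivalence requirement $Q\sim P$ with the sparse integrability hypothesis \eqref{harom}: a naive linear tilt $1+b_i\varepsilon_i$ is not a.s.\ positive when $\varepsilon_i$ is unbounded, while an exponential Esscher tilt would demand exponential integrability of the $\varepsilon_i$, which is not available. The truncation device threads the needle by taking $N_0$ large enough that enough variance of $\varepsilon_i$ is retained (so the small coefficient $\lambda_i=O(b_i)$ actually shifts the mean to $b_i$) and simultaneously keeping the tilt uniformly bounded above and away from $0$ for all sufficiently large $i$; the summability $\sum b_i^2<\infty$ then propagates through the infinite product in each of the three $L^q$ senses needed.
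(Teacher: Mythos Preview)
Your argument is correct and runs parallel to the paper's, but with a genuinely different and more elementary tilt. The paper fixes the smooth sigmoid $\psi(x)=1/2+1/(1+e^x)$, takes densities $\psi(a_i(\varepsilon_i-b_i))/E\psi(a_i(\varepsilon_i-b_i))$, and invokes the implicit function theorem to produce parameters $a_i$ with $|a_i|\le L|b_i|$ realising $E_Q\varepsilon_i=b_i$; the $L^p$ control then comes from a second-order Taylor analysis of $\ln E[\psi^w(\cdot)]$ near the origin, uniform in $i$ thanks to \eqref{harom}. Your truncated linear tilt replaces all of this by an explicit formula for $\lambda_i$ and the single inequality $|(1+x)^q-1-qx|\le C_q x^2$ on $|x|\le 1/2$, which is cleaner and avoids the implicit function theorem entirely. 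Both constructions share the decisive feature that the factor densities are uniformly bounded in $[1/2,3/2]$ for large $i$, so the same $Q$ in fact works for every $p$, and both need the finitely many remaining indices handled separately via \eqref{villi}; the paper does this through Corollary \ref{meggy} (a one-asset utility-maximisation argument producing a bounded density with $dP/dW\in L^p$), while your piecewise-constant construction is again more direct. Two small points: your claimed equality $E_PZ_\infty^{-p}=\prod_i Ef_i^{-p}$ is really only the Fatou inequality $\le$, which is all you need; and your acknowledgement that \eqref{villi} is required for the small indices is honest---the paper's proof uses it too (through Corollary \ref{meggy}) even though the lemma as stated omits it.
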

\begin{proof}
This result essentially follows from Theorem 2 in R\'asonyi (2004).
We recall the main elements of that argument.
Define the function $\psi(x):=1/2+1/(1+e^x)$, $x\in\mathbb{R}$. Using the
implicit function theorem, it is shown in the proof of Theorem
2 in R\'asonyi (2004) that there exists a sequence $a_i$, $i\geq 1$ and $N^*\in\mathbb{N}$
such that for all $N\geq N^*$, 
\begin{equation}
\frac{dQ(N)}{dP}:=\prod_{i=N}^{\infty} \frac{\psi(a_i(\varepsilon_i-b_i))}{E\psi(a_i(\varepsilon_i-b_i))}
\end{equation}
defines $Q(N)\sim P$ such that $E_{Q(N)}\left[\varepsilon_i\right]=b_i$ for $i\geq N$. 
Furthermore, there is a constant $L>0$
such that 
\begin{equation}\label{ab}
|a_i|\leq L|b_i|\mbox{ for }i\geq N^*.
\end{equation}

It can now be checked by direct calculations that, for each $w\in\mathbb{R}$,
the function
\begin{equation}
(a,b)\to f_i(a,b):=\ln \frac{E\left[\psi^w(a(\varepsilon_i-b))\right]}
{\left(\left[E\psi(a(\varepsilon_i-b))\right]\right)^w}
\end{equation}
is twice continuosly differentiable in a neighbourhood $U$ of $(0,0)$ (which does not depend on $i$),
$f_i(0,0)=\partial_a f_i(0,0)=\partial_b f_i(0,0)=0$ and the second derivatives are uniformly (in $i$)
bounded in $U$.  This fact was used in Theorem 2 of R\'asonyi (2004) for $w=2$ but it holds, in fact,
for every $w$.

There is $N'\geq N^*$ such that $(a_i,b_i)\in U$ for $i\geq N'$, hence
\begin{equation}
E\left[\left(\frac{dQ(N')}{dP}\right)^w\right]\leq \prod_{i=N'}^{\infty} e^{c\sum_{i=1}^{\infty} (a_i^2+b_i^2)}
\end{equation}
for a constant $c>0$. Noting \eqref{ab} and Assumption \ref{b} we conclude that $(dQ(N')/dP)^w\in L^1$. 
It is clear
that the argument also works for two values $w=\pm p$ at the same time and this provides $Q(N')$
such that $dQ(N')/dP,dP/dQ(N')\in L^p$. 

For $1\leq i<N'$, let $g_i:=dW/dP$ where $W$ is given by Corollary \ref{meggy} for the choice $X:=\varepsilon_i-b_i$.
By independence of the $\varepsilon_i$, it is clear that 
\begin{equation}
\frac{dQ}{dP}:=\frac{dQ(N')}{dP}\prod_{i=1}^{N'-1} g_i
\end{equation}
provides a suitable probability $Q$.
\end{proof}

\begin{lemma}\label{fontos} 
Let $u:\mathbb{R}\to\mathbb{R}$ be concave and non-decreasing, satisfying
\begin{eqnarray}\label{harsh}
u(x)\leq C_1(x^{\alpha}+1)\mbox{ for all }x\geq 0,\\
 u(x)\leq C_2(-|x|^{\beta}+1)\mbox{ for all }x<0,
\end{eqnarray}
with some $C_1,C_2>0$, $0\leq \alpha<1<\beta$. If there exists 
$Q\in\mathcal{M}$ such that $dQ/dP\in L^q$, $dP/dQ\in L^r$, where $q\geq\max\{2,\beta/(\beta-1)\}$, 
$r>\alpha/(1-\alpha)$ then
there exists $X\in \overline{K}_1$ such that
\begin{equation}\label{benign}
E\left[u(X)\right]=\sup_{\phi\in\mathcal{A}'(u)}E\left[u(V(\phi))\right],
\end{equation}
where $\overline{K}_1$ denotes the closure of $K_1$ for convergence in probability.
\end{lemma}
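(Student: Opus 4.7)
The plan is to take a maximizing sequence $\phi_n \in \mathcal{A}'(u)$, extract via Komlós's theorem (applied under $Q$) a subsequence whose Cesaro means $\tilde V_m$ converge $P$-almost surely, identify the limit $Y$ as an element of $K_1 = \overline{K}_1$ using the closedness provided by Lemma \ref{tartu} (and the fact that $K_1$ is a vector subspace, hence convex), and finally verify $E[u(Y)] = s := \sup_{\phi \in \mathcal{A}'(u)} E[u(V(\phi))]$ via upper/lower-semicontinuity arguments. The crux is an a priori $L^\beta$-bound on $V(\phi_n)^-$, obtained by playing the two growth hypotheses on $u$ against each other through the risk-neutral measure $Q$.

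Concretely, from \eqref{moko} one gets pointwise $(V(\phi)^-)^\beta \leq 1 + u^-(V(\phi))/C_2$, so $E[(V(\phi)^-)^\beta] \leq 1 + E[u^-(V(\phi))]/C_2$. In the opposite direction, \eqref{harsh} together with Hölder under $Q$ (conjugates $1/\alpha, 1/(1-\alpha)$) yields
\begin{equation}
E[(V(\phi)^+)^\alpha] = E_Q[(V(\phi)^+)^\alpha (dP/dQ)] \leq E_Q[V(\phi)^+]^\alpha \bigl(E_P[(dP/dQ)^{\alpha/(1-\alpha)}]\bigr)^{1-\alpha},
\end{equation}
where I have used the identity $E_Q[(dP/dQ)^{1/(1-\alpha)}] = E_P[(dP/dQ)^{\alpha/(1-\alpha)}]$, finite thanks to $r > \alpha/(1-\alpha)$. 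Since $dQ/dP \in L^q \subset L^2$, Remark \ref{nulla} gives $E_Q[V(\phi)] = 0$, so $E_Q[V(\phi)^+] = E_Q[V(\phi)^-] \leq \|V(\phi)^-\|_\beta \|dQ/dP\|_{\beta/(\beta-1)}$ by another Hölder, using $q \geq \beta/(\beta-1)$. Combining, $E[u^+(V(\phi))] \leq c_1 + c_2 \|V(\phi)^-\|_\beta^\alpha$. Writing $M_n := \|V(\phi_n)^-\|_\beta$, the maximizing property forces $s - 1 \leq c_1 + c_2 M_n^\alpha - C_2 M_n^\beta + C_2$ for large $n$; since $\beta > 1 > \alpha$, $M_n$ must remain bounded, and one simultaneously sees that $s < \infty$.

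Consequently $E_Q[|V(\phi_n)|] = 2 E_Q[V(\phi_n)^-]$ is uniformly bounded, and Komlós's theorem applied under $Q$ yields a subsequence with $\tilde V_m \to Y$ $P$-a.s. The closedness of $K_1$ (Lemma \ref{tartu}) then gives $Y = V(\phi^*)$ for some $\phi^* \in \mathcal{A}$. By concavity of $u$, $E[u(\tilde V_m)] \geq \frac{1}{m}\sum_{k=1}^m E[u(V(\phi_{n_k}))] \to s$. For the negative part, $u^-(\tilde V_m) \leq \frac{1}{m}\sum u^-(V(\phi_{n_k}))$ (from concavity of $u$ and subadditivity of $x \mapsto x^+$), which is uniformly bounded in $L^1$, so Fatou gives $E[u^-(Y)] \leq \liminf E[u^-(\tilde V_m)] < \infty$, whence $\phi^* \in \mathcal{A}'(u)$. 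For the positive part, I pick $\alpha' \in (\alpha, 1)$ with $r > \alpha'/(1-\alpha')$ and repeat the a priori computation with $\alpha$ replaced by $\alpha'$: this bounds $\|V(\phi_n)^+\|_{\alpha'}$, hence also $\|\tilde V_m^+\|_{\alpha'}$, so $(\tilde V_m^+)^\alpha$ is bounded in $L^{\alpha'/\alpha}$ with $\alpha'/\alpha > 1$ and therefore uniformly integrable. Combined with $u^+(x) \leq K(1 + (x^+)^\alpha)$, the family $u^+(\tilde V_m)$ is uniformly integrable, so $E[u^+(\tilde V_m)] \to E[u^+(Y)]$. Putting the halves together, $s = \lim E[u(\tilde V_m)] \leq E[u(Y)]$, matching the trivial reverse inequality.

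The main technical difficulty is the simultaneous calibration of the a priori bound and the uniform-integrability upgrade: one must choose exponents so that the risk-neutral arguments deliver both the finiteness of $E[u^+(V(\phi_n))]$ and slightly more integrability of $V(\phi_n)^+$ than is strictly needed to extract the Komlós limit. This slack is exactly what the strict inequality $r > \alpha/(1-\alpha)$ in the hypotheses provides.
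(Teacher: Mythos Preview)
Your argument follows essentially the same route as the paper: a priori control of $\|V(\phi_n)^-\|_\beta$ by playing the two growth bounds on $u$ against the risk-neutral identity $E_Q[V(\phi)^+]=E_Q[V(\phi)^-]$ (via two H\"older inequalities with exponents governed by $r$ and $q$), then Koml\'os on the maximizing sequence, uniform integrability of $u^+(\tilde V_m)$ obtained by upgrading the exponent from $\alpha$ to some $\alpha'\in(\alpha,1)$ with $\alpha'/(1-\alpha')\le r$, and Fatou for the negative part. The paper organizes the estimates slightly differently but the ingredients are the same.

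One point to flag: you invoke Lemma~\ref{tartu} to conclude $Y\in K_1$ and thereby justify the ``trivial reverse inequality'' $E[u(Y)]\le s$. But Lemma~\ref{tartu} requires Assumption~\ref{b} and condition~\eqref{subgauss}, neither of which is a hypothesis of Lemma~\ref{fontos}. Under the hypotheses actually stated, you can only conclude $Y\in\overline{K}_1$ (from convexity of $K_1$), and the reverse inequality is \emph{not} available at this stage. The paper's own proof stops at $E[u(X)]\ge s$ and defers the identification $X=V(\phi^*)$---and hence the equality---to the proof of Theorem~\ref{duda}, where Lemma~\ref{tartu} is legitimately invoked under Assumption~\ref{novum}. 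If you drop the appeal to Lemma~\ref{tartu} and stop at $E[u(Y)]\ge s$, your argument matches the paper's.
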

\begin{proof}
The arguments in this proof are inspired by those of Theorem 4.7 in R\'asonyi (2016). Denote $Y:=V(\phi)$ for some 
$\phi\in\mathcal{A}'(u)$. By H\"older's inequality, Remark \ref{nulla} and \eqref{harsh},
\begin{eqnarray}\nonumber
E\left[u(Y^+)\right] &\leq& C_1\left(E\left[(Y^+)^{\alpha}\right]+1\right)\leq 
C_1\left(C'\left( E_Q \left[Y^+\right]\right)^\alpha +1\right)=\\ 
C_1\left(C'\left(E_Q\left[Y^-\right]\right)^\alpha +1\right)
&\leq& C_1\left(C'C''\left(E\left[(Y^-)^{\beta}\right]\right)^{\alpha/\beta} +1\right)\nonumber \\ &\leq&
C_1\left(C'C''\left(-(1/C_2)E\left[u(-Y^-)\right]+1\right)^{\alpha/\beta}+1\right)\label{diehard}
\end{eqnarray}
with $C':=\left(E\left[(dP/dQ)^{\alpha/(1-\alpha)}\right]\right)^{1-\alpha}$, 
$C'':=\left(E\left[(dQ/dP)^{\beta/(\beta-1)}\right]\right)^{\alpha(\beta-1)/\beta}$. 
Let $\phi_n\in\mathcal{A}'(u)$ be such that $E\left[u(V(\phi_n))\right]\to \sup_{\phi\in
\mathcal{A}'(u)}E\left[u(V(\phi))\right]$, $n\to\infty$. If we had $\sup_n E_Q\left[|V(\phi_n)|\right]=\infty$ then
(along a subsequence), $E\left[u(-V^-(\phi_n))\right]\to -\infty$, $n\to\infty$, by \eqref{diehard}.
Using the fact that for a non-increasing function $u$ one has 
\begin{equation}
u(x)\leq u(x^+)+u(-x^-)+|u(0)|,
\end{equation}
it follows that
\begin{equation}
E\left[u(V(\phi_n))\right]\leq E\left[u(V^+(\phi_n))\right]+E\left[u(-V^-(\phi_n))\right]+|u(0)|\to-\infty,
\end{equation}
by \eqref{diehard} and by $\alpha/\beta<1$, which is absurd. Hence $\sup_n E_Q\left[|V(\phi_n)|\right]$ 
must be finite and Koml\'os's theorem
(see Koml\'os (1967)) implies that, for some convex combinations of a subsequence,
\begin{equation}
\widehat{\phi}_N=\frac{1}{N}\sum_{k=1}^N \phi_{n_k},
\end{equation}
one has $V(\widehat{\phi}_N)\to X$ a.s., $N\to\infty$.
By convexity of $K_1$, clearly $X\in \overline{K}_1$. We claim that $X$ satisfies \eqref{benign}.
 
Indeed, choose $\theta>\alpha$ such that $\theta/(1-\theta)= r$. We get by H\"older's inequality,
\begin{eqnarray}
E\left[(V^+(\widehat{\phi}_N))^{\theta}\right] &\leq& C'''\left(E_Q\left[V^+(\widehat{\phi}_N)\right]\right)^\theta
\leq C'''\left(\sup_n E_Q\left[ |V(\phi_n)|\right] \right)^{\theta}
\end{eqnarray}
with constant $C''':=\left(E\left[(dP/dQ)^{\theta/(1-\theta)}\right]\right)^{1-\theta}$, for all $n$,
recalling that the $\widehat{\phi}_N$ are convex combinations of the $\phi_n$. As the latter expression
has a finite supremum in $n$ by our arguments above, we get by \eqref{harsh} that the family $u(V^+(\widehat{\phi}_N))$,
$N\in\mathbb{N}$ is uniformly integrable. This, combined with Fatou's lemma and the convexity of $u$, implies 
\begin{equation}
E\left[u(X)\right]\geq \limsup_{N\to\infty} E\left[u(V(\widehat{\phi}_N))\right]\geq 
\limsup_{n\to\infty} E\left[u(V({\phi}_n))\right],
\end{equation}
and the proof is finished.
\end{proof}

\begin{proof}[Proof of Theorem \ref{duda}.] Notice that Assumption \ref{novum} implies
\eqref{harom}. Choose $p$ with $p>\max\{2,\beta/(\beta-1),\alpha/(1-\alpha)\}$ and invoke Lemma \ref{uff}.
From Lemma \ref{fontos} we get a maximizer $X\in \overline{K}_1$ and, by Lemma \ref{tartu},
$X=V(\phi^*)$ for some $\phi^*\in\mathcal{A}$. Since $0\in\mathcal{A}'(u)$, $E\left[u(V(\phi^*))\right]
\geq E\left[u(0)\right]>-\infty$ 
so $\phi^*\in\mathcal{A}'(u)$ as well and the proof is finished.
\end{proof}

\begin{remark} {\rm
It is clear from the above proofs that $E[u(X^+)]<\infty$ hence 
\begin{equation}
E\left[u(V(\phi^*))\right]=\sup_{\phi\in\mathcal{A}'(u)}E\left[ u(V(\phi))\right]<\infty.
\end{equation}}
\end{remark}

\section{Appendix}\label{appo} 

Here we briefly treat utility maximization for the case of one asset, which is rather elementary but
indispensable for the proof of Theorem \ref{duda}. The arguments below are
standard, see e.g. Davis (1997), F\"ollmer \& Schied (2002), but we could not find a reference that would cover the
setting we need.

Let $X$ be a real-valued random variable with $P(X>0),P(X<0)>0$ and $E|X|<\infty$. 
Let furthermore $u$ be concave and nondecreasing, $u(x)= cx$, $x<0$, 
$u(x)\leq C(x^{\alpha}+1)$ for $x\geq 0$ with
some $c,C>0$, $0\leq\alpha<1$.

\begin{proposition}\label{jules}
There exists $\phi^*\in\mathbb{R}$ such that 
\begin{equation}
\sup_{\phi\in\mathbb{R}}E\left[u(\phi X)\right]=E\left[u(\phi^* X)\right].
\end{equation}
\end{proposition}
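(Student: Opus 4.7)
The plan is to work directly with the real-valued function $f(\phi):=E[u(\phi X)]$ and show it is a finite, concave, continuous function on $\mathbb{R}$ that is coercive in the sense $f(\phi)\to -\infty$ as $|\phi|\to\infty$. Any such function attains its supremum at some $\phi^*\in\mathbb{R}$.

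First I would verify that $f(\phi)$ is finite for every $\phi\in\mathbb{R}$. For the positive part, $u((\phi X)^+)\leq C(((\phi X)^+)^\alpha+1)\leq C(|\phi|^\alpha |X|^\alpha+1)$, which is integrable since $E|X|<\infty$ and $0\le\alpha<1$ imply $E|X|^\alpha<\infty$ (by Jensen or the elementary bound $|X|^\alpha\leq |X|+1$). For the negative part, on $\{\phi X\ge 0\}$ one has $u(\phi X)\ge u(0)$, and on $\{\phi X<0\}$ one has $u(\phi X)=c\phi X$, so $u(\phi X)^-\leq |u(0)|+c|\phi||X|$, again integrable. Hence $f(\phi)\in\mathbb{R}$. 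Concavity of $f$ follows at once from concavity of $u$ and linearity of expectation, and a finite concave function on the real line is automatically continuous.

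Next I would establish coercivity. If $\phi>0$, then $u(\phi X)=c\phi X$ on $\{X<0\}$ and $u(\phi X)\le C(\phi^\alpha X^\alpha+1)$ on $\{X\ge 0\}$, giving
\begin{equation}
f(\phi)\leq -c\phi E[X^-]+C\phi^\alpha E[(X^+)^\alpha]+C.
\end{equation}
Since $P(X<0)>0$ we have $E[X^-]>0$, and since $\alpha<1$ the negative linear term in $\phi$ dominates, so $f(\phi)\to-\infty$ as $\phi\to+\infty$. Symmetrically, if $\phi<0$, then $u(\phi X)=c\phi X$ on $\{X>0\}$ and $u(\phi X)\le C(|\phi|^\alpha (X^-)^\alpha+1)$ on $\{X\le 0\}$, yielding
\begin{equation}
f(\phi)\leq -c|\phi|E[X^+]+C|\phi|^\alpha E[(X^-)^\alpha]+C,
\end{equation}
which tends to $-\infty$ as $\phi\to-\infty$ because $P(X>0)>0$ ensures $E[X^+]>0$.

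With $f$ continuous, finite everywhere, and $f(\phi)\to-\infty$ at both ends, the supremum $s:=\sup_\phi f(\phi)\ge f(0)=u(0)>-\infty$ is attained: choose any sequence $\phi_n$ with $f(\phi_n)\to s$; coercivity forces $(\phi_n)$ to be bounded, pass to a convergent subsequence with limit $\phi^*$, and continuity gives $f(\phi^*)=s$. The only step that needs care is the coercivity bound, since it uses in an essential way both of the hypotheses $P(X>0)>0$ and $P(X<0)>0$ together with the gap $\alpha<1$ between the linear loss regime and the sublinear gain regime; concavity alone does not rule out $f$ being unbounded above or attaining its supremum only at infinity.
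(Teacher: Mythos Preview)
Your proof is correct and follows essentially the same approach as the paper: establish that $\phi\mapsto E[u(\phi X)]$ is finite and continuous, then show coercivity via the bound $E[u(\phi X)]\leq C\phi^\alpha E[(X^+)^\alpha]+C - c\phi E[X^-]$ (and its symmetric counterpart), and conclude. The only cosmetic difference is that you obtain continuity from the general fact that a finite concave function on $\mathbb{R}$ is continuous, whereas the paper proves continuity directly via dominated convergence using the bound $|u(\phi_n X)|\leq \max\{c,C\}[(D|X|+1)+1]$ with $D=\sup_n|\phi_n|+1$; both are valid and the rest of the argument is identical.
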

\begin{proof}
Let $\phi>0$ first. Then $E\left[u(\phi X)\right]\leq C(\phi^{\alpha}E\left[(X^+)^{\alpha}\right] +1)-\phi 
cE\left[X^-\right]$
and this tends to $-\infty$ as $\phi\to\infty$ since $E\left[X^-\right]>0$. A similar argument
shows that $E\left[u(\phi X)\right]\to -\infty$ as $\phi\to -\infty$. The function $\phi\to E\left[u(\phi)\right]$ is
continuous. To see this, let $\phi_n \to \phi$. Clearly, for $D:=\sup_n |\phi_n|+1$ we
have $|u(\phi_n X)|\leq \max\{c,C\}[(D|X|+1)+1]$ for all $n$, using the inequality $x^{\alpha}\leq x+1$.
Lebesgue's theorem implies
$E\left[u(\phi_n X)\right]\to E\left[u(\phi X)\right]$. Now the statement is clear.
\end{proof}

\begin{corollary}\label{meggy}
For each $p\geq 1$ there is $W=W(p)\sim P$ with $E_W[X]=0$ such that $dW/dP\in L^{\infty}$ and
$dP/dW\in L^p$.
\end{corollary}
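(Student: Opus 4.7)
My plan is to deduce Corollary \ref{meggy} from Proposition \ref{jules} by choosing a $p$-dependent utility whose marginal is bounded above and whose reciprocal $u'(\phi^*X)^{-p}$ is controlled by only a half-moment of $X$. Fix $p\geq 1$ and set $\alpha := 1 - 1/(2p)\in[1/2,1)$. I would take
\[
u(x) := \alpha x \text{ for } x \leq 0, \qquad u(x) := (1+x)^{\alpha} - 1 \text{ for } x \geq 0.
\]
A routine check shows $u$ is continuous, concave, non-decreasing and $C^1$, with $u'(x) = \alpha$ on $(-\infty,0]$ and $u'(x) = \alpha(1+x)^{\alpha-1}\in(0,\alpha]$ on $[0,\infty)$. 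In particular, $u$ satisfies the hypotheses of Proposition \ref{jules} with $c = \alpha$ and the same $\alpha$, so there exists $\phi^*\in\mathbb{R}$ maximizing $E[u(\phi X)]$.

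Next I would establish the first-order condition $E[u'(\phi^* X)X] = 0$. Since $|u'(\phi X)X|\leq \alpha|X|\in L^1$ uniformly in $\phi$, dominated convergence makes $\phi \mapsto E[u(\phi X)]$ everywhere differentiable with derivative $E[u'(\phi X)X]$, and vanishing at the (interior) maximum $\phi^*$ is then immediate.

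I would then set $dW/dP := u'(\phi^* X)/E[u'(\phi^* X)]$. Because $u'>0$ everywhere, $W\sim P$; because $u'\leq\alpha$, $dW/dP \in L^\infty$; and the first-order condition reads $E_W[X] = 0$. Verifying $dP/dW \in L^p$ amounts to $E[u'(\phi^*X)^{-p}]<\infty$. On $\{\phi^*X\leq 0\}$ the integrand is the constant $\alpha^{-p}$; on $\{\phi^*X>0\}$, the identity $p(1-\alpha) = 1/2$ yields
\[
u'(\phi^*X)^{-p} = \alpha^{-p}(1+\phi^*X)^{1/2}\leq \alpha^{-p}\bigl(1+|\phi^*||X|\bigr)^{1/2},
\]
which is $P$-integrable since $|X|^{1/2}\leq 1+|X|\in L^1$.

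The main obstacle, conceptually, is the apparent mismatch between the sublinear growth required by Proposition \ref{jules} and any attempt to control $dP/dW$: a utility with $u'$ bounded away from $0$ (which would immediately give $dP/dW\in L^\infty$) is incompatible with the sublinearity condition $u(x)\leq C(x^\alpha+1)$, so some moment of $X^+$ \emph{must} be transferred through $u'(\phi^*X)^{-p}$. The explicit choice $\alpha = 1 - 1/(2p)$ is exactly what reduces the required moment to $E[|X|^{1/2}]$, which is automatic from $E|X|<\infty$. Everything else — concavity, the FOC, boundedness of $dW/dP$, and equivalence $W\sim P$ — is routine once this calibration has been fixed.
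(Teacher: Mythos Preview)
Your proof is correct and follows essentially the same route as the paper: the same piecewise utility $u(x)=\alpha x$ on $(-\infty,0]$ and $u(x)=(1+x)^{\alpha}-1$ on $[0,\infty)$, the same first-order condition via the domination $|u'(\phi X)X|\leq \alpha|X|$, and the same density $dW/dP\propto u'(\phi^*X)$. The only difference is cosmetic: the paper says ``choose $\alpha$ close enough to $1$'' so that $(1+|\phi^*X|)^{p(1-\alpha)}\in L^1$, whereas you fix the explicit value $\alpha=1-1/(2p)$, which makes the required exponent $1/2$ and the integrability check slightly cleaner.
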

\begin{proof} Define $u(x)=\alpha x$, $x\leq 0$ and $u(x)=(x+1)^\alpha-1$ for $x>0$ with
some $0<\alpha<1$. Clearly, $u$ is continuously differentiable.
We claim that $\phi\to E\left[u(\phi X)\right]$ is also continuously differentiable. 
It suffices to show that the set of random variables $u'(\phi X)X$, $\phi\in\mathbb{R}$ is dominated
by a random variable of finite expectation, which is clear as $|u'(\phi X)X|\leq \alpha |X|$.
By Proposition \ref{jules}, the function $\phi\to E[u(\phi X)]$ attains its maximum at $\phi^*$, so
$E\left[u'(\phi^* X)X\right]=0$ which implies $E_W[X]=0$ for the
probability $W\sim P$ defined by
\begin{equation}
\frac{dW}{dP}:=\frac{u'(\phi^* X)}{E\left[u'(\phi^* X)\right]}.
\end{equation}
As $u'$ is bounded, $dW/dP\in L^{\infty}$. We also have that $dP/dW$
is smaller than constant times $(|\phi^* X| +1)^{1-\alpha}$. Since $X$ has
a finite expectation, the latter expression is in $L^p$ for an arbitrarily large $p$
provided that $\alpha$ is chosen close enough to $1$.
\end{proof}

{\bf Acknowledgments.} I thank the referee for useful comments. This work was supported by the
Hungarian Academy of Sciences [``Lend\"ulet'' Grant LP2015-6].


\begin{thebibliography}{10}

\bibitem{sara}
S. Biagini \& M. Frittelli (2005) \newblock
Utility maximization in incomplete markets for unbounded processes,
\newblock\emph{Finance Stoch.} \textbf{9}, 493--517.

\bibitem{josef} C. Cuchiero, I. Klein \& J. Teichmann (2015)
\newblock A new perspective on the fundamental theorem of 
asset pricing for large financial markets, 
\newblock {\em Teoriya Veroyatnostei i ee Primeneniya},
\textbf{60}, 660--686. 
\newblock {\em Forthcoming in Theory of Probability and its Applications}, 2016.\newblock\texttt{arXiv:1412.7562}

\bibitem{davis}
M.H.A. Davis (1997)
\newblock Option pricing in incomplete markets,
\newblock In: {\em Mathematics of derivative securities} (M. A. H. Dempster and S. R. Pliska, ed.), 
216--226, Cambridge University Press.

\bibitem{paolo} M. De Donno, P. Guasoni \& M. Pratelli (2005)
\newblock Superreplication and utility maximization in large financial markets.
\newblock\emph{Stochastic Process. Appl.} \textbf{115}, 2006--2022.

\bibitem{fs}
H.~F{\"o}llmer \& A.~Schied (2002)
\newblock \emph{Stochastic finance: an introduction in discrete time.}
\newblock Walter de Gruyter \& Co., Berlin.

\bibitem{huang} Ch.-F. Huang \& R. H. Litzenberger (1988)
\newblock{\em Foundations for financial economics.} \newblock Elsevier.

\bibitem{huberman}
G. Huberman (1982) 
\newblock A simple approach to arbitrage pricing theory,
\newblock \emph{J. Econom. Theory}  \textbf{28}, 289--297.


\bibitem{kabanov-kramkov}
Yu. M. Kabanov \& D. O. Kramkov (1994)
\newblock Large financial markets: asymptotic arbitrage and contiguity,
\newblock {\em Theory Probab. Appl.} \textbf{39}, 182--187.

\bibitem{kk}
Yu. M. Kabanov \& D. O. Kramkov (1998) 
\newblock Asymptotic arbitrage in large financial markets,
\newblock \emph{Finance Stoch.} \textbf{2}, 143--172.


\bibitem{ks_exp}
Yu. M. Kabanov \& Ch. Stricker (2002)
\newblock On the optimal portfolio for the exponential utility maximization: remarks to the six-author paper,
\newblock \emph{Math. Finance} \textbf{12}, 125--134.


\bibitem{klein}
I. Klein (2000)
\newblock A fundamental theorem of asset pricing for large financial markets,
\newblock {\em Math. Finance}, \textbf{10}, 443--458.

\bibitem{klein-schachermayer1}
I. Klein \& W. Schachermayer (1996a)
\newblock Asymptotic arbitrage in non-complete large financial markets,
\newblock {\em Theory Probab. Appl.} \textbf{41}, 780--788.

\bibitem{klein-schachermayer2}
I. Klein \& W. Schachermayer (1996b)
\newblock A quantitative and a dual version of the {H}almos-{S}avage theorem
  with applications to mathematical finance,
\newblock {\em Ann. Probab.} \textbf{24}, 867--881.

\bibitem{komlos}
J.~Koml{\'o}s (1967)
\newblock A generalization of a problem of {S}teinhaus,
\newblock {\em Acta Math. Acad. Sci. Hungar.} \textbf{18}, 217--229.


\bibitem{KS99}
D.~O. Kramkov \& W.~Schachermayer (1999)
\newblock The asymptotic elasticity of utility functions and optimal investment
  in incomplete markets,
\newblock \emph{Ann. Appl. Probab.} \textbf{9}, 904--950.

\bibitem{owen}
M.~P. Owen \& G.~{\v{Z}}itkovi{\'c} (2009)
\newblock Optimal investment with an unbounded random endowment and
  utility-based pricing,
\newblock {\em Math. Finance} \textbf{19}, 129--159.

\bibitem{def} M. R\'asonyi (2004)
\newblock Arbitrage pricing theory and risk-neutral
measures,
\newblock {\em Decis. Econ. Finance} \textbf{27}, 109--123.

\bibitem{callum} M. R\'asonyi (2016) \newblock Maximizing expected utility in the Arbitrage
Pricing Model, \newblock{\em Submitted}, \newblock\texttt{arXiv:1508.07761v2} 


\bibitem{ross}
S. A. Ross. (1976)
\newblock The arbitrage theory of capital asset pricing,
\newblock \emph{J. Econom. Theory} \textbf{13}, 341--360.


\bibitem{rokhlin} D. B. Rokhlin (2008) 
\newblock {Asymptotic arbitrage and num\'eraire portfolios in large financial markets},
\newblock \emph{Finance Stoch.} \textbf{12}, 173--194.



\bibitem{w}
W.~Schachermayer (2001)
\newblock Optimal investment in incomplete markets when wealth may become
  negative,
\newblock \emph{Ann. Appl. Probab.} \textbf{11}, 694--734.


\end{thebibliography}
\end{document}